\newtheorem{theorem}{Theorem}
\newtheorem{corollary}[theorem]{Corollary}
\def\inline#1:{\par\vskip 7pt\noindent{\bf #1:}\hskip 10pt}
\def\blackslug{\hbox{\hskip 1pt \vrule width 4pt height 8pt
    depth 1.5pt \hskip 1pt}}
\def\QED{\quad\blackslug\lower 8.5pt\null\par}
\def\Prob{\mathbb{P}}
\def\Exp{\mathbb{E}}
\def\policy{\mbox{\tt\bf P}}
\def\PP{\mbox{\tt P}}
\def\GWpolicy{\mbox{\bf\it p}}
\def\pp{p}
\def\actual{\mbox{\tt\bf A}}
\def\AA{\mbox{\tt A}}
\def\family{\mbox{\tt\bf F}}
\def\FF{\mbox{\tt F}}
\def\twochildplus{\mbox{\tt C}^{++}}
\def\onechild{\mbox{\tt 1C}}
\def\twochild{\mbox{\tt 2C}}
\def\ztwochild{\mbox{\tt 0/2C}}
\def\zthreechild{\mbox{\tt 0/3C}}
\def\onechildsub{\mbox{\tt\scriptsize 1C}}
\def\twochildsub{\mbox{\tt\scriptsize 2C}}
\def\zthreechildsub{\mbox{\tt\scriptsize 0/3C}}
\def\ztwochildsub{\mbox{\tt\scriptsize 0/2C}}
\def\chinasub{\mbox{\tt\scriptsize china}}
\def\indiasub{\mbox{\tt\scriptsize india}}
\def\SNone{\mbox{\tt SN}_{\onechildsub}}
\def\SNtwo{\mbox{\tt SN}_{\twochildsub}}
\def\SNztwo{\mbox{\tt SN}_{\ztwochildsub}}
\def\SNzthree{\mbox{\tt SN}_{\zthreechildsub}}
\def\SNchina{\mbox{\tt SN}_{\chinasub}}
\def\SNindia{\mbox{\tt SN}_{\indiasub}}
\begin{document}

\title{The Effect of Population Control Policies \\
on Societal Fragmentation
\thanks{Supported in part by the Israel Science Foundation (grant 1549/13).}
}

\author{Zvi Lotker
\thanks{Department of Communication Systems Engineering, Ben Gurion University of the Negev, Beer-Sheva, Israel. E-mails:~{\tt zvilo@cse.bgu.ac.il}}
\and
David Peleg
\thanks{Department of Computer Science, The Weizmann Institute of Science, 
Rehovot, Israel. E-mail:~{\tt david.peleg@weizmann.ac.il}}
}

\maketitle

\begin{abstract}
Population control policies are proposed and in some places employed
as a means towards curbing population growth.
This paper is concerned with a disturbing side-effect of such policies, 
namely, the potential risk of {\em societal fragmentation}
due to changes in the distribution of family sizes.
This effect is illustrated in some simple settings and demonstrated 
by simulation. In adition, the dependence of societal fragmentation
on family size distribution is analyzed. In particular, it is shown
that under the studied model, any population control policy
that disallows families of 3 or more children incurs the possible risk
of societal fragmentation.
\end{abstract}

\section{Introduction}

\subsection{Background}
Global population explosion is viewed by many as a major threat to the
well-being and stability of the human race.
Consequently, various population control policies were proposed,
with the goal of curbing the growth of the human population on earth.
Some of those policies were even implemented in different countries,
the Chinese one-child policy being the most well-known instance.

This paper is concerned with a less well-studied side-effect of such policies,
namely, the potential risk to the fabric structure of society, 
in the form of {\em societal fragmentation}.
Our main contribution is to demonstrate the potential occurrence of this
effect, by viewing society as a {\em social network} and employing
tools of social network theory.
Moreover, we demonstrate that different population control policies may yield
radically different outcomes in terms of fragmentation, even if they yield
essentially the same outcome in terms of the resulting population size.

A social network represents society as a graph, in which vertices represent
individual members of society and a link connecting two vertices
represents a social connection between the two individuals.
By ``societal fragmentation'' we refer to the situation where the social
network becomes disconnected, and breaks into a large number of separate 
(medium to large) connected components. In contrast, the network is considered
to be connected, or non-fragmented, if it consists of essentially one
large connected component (a so-called ``giant'' component), 
possibly along with some additional vanishingly small components.

Let us stress at the outset that a complete analysis of fragmentation 
as a function of population control policy in real networks requires 
a very precise and careful modeling, taking into account multiple parameters, 
and is outside the scope of the current paper.
Rather, our purpose is to illustrate the fragmentation effect and demonstrate 
its crucial dependence on 
the particular population control policies being utilized.
We therefore adopt a simplistic societal model, stripping away many of the
complicating parameters and focusing only on the pertinent features.

Societal connections are often classified into ``strong'' and ``weak'' ties,
see \cite{burt2009structural}. We employ a {\em strong-ties} network model 
of connectivity, namely, one that focuses only on strong ties, and ignores
weak ties.
Hence, we consider the social network as fragmented into separate 
connected components if there are no strong ties connecting those components, 
even if there exist some weak ties that connect them.

In particular, our model assumes that family ties are strong ties,
namely, there is a link connecting a person to his or her parents
and siblings, as well as between spouses.
In contrast, weaker types of social links 
(e.g., based on work/school relationships etc.) are ignored.

Viewing the recent Chinese history as a key example central to understanding
the phenomena of population reduction and societal fragmentation,
we seek to model social interconnections between the part of society
that was directly affected by the population control policy,
somewhat arbitrarily defined to be the generation born after 1980,
see \cite{China-1child-book-11,weibo-2015}.

Therefore, we view the social network as {\em layered} into generations,
and focus on analyzing the connectivity properties within a single generation
(particularly, the latest). In view of this, we ignore parent-child links
and include in our {\em strong-ties} network model only links between siblings
and spouses\footnote{We simplify our model by ignoring the fact that 
marital relations can be untied by divorce, in most of the world.}.
Hence in our model, a {\em strong-ties social network} is a social network 
where edges indicate sibling or marriage relations.

\subsection{Family parameters and fragmentation}

The size and fabric structure of society are largely determined by two central 
parameters, collectively referred to hereafter as the {\em family parameters},
namely, the {\em distribution of family sizes} and the {\em marriage ratio}.

The {\em distribution of family sizes} is described by a real vector
$$\family=(\FF_0,\FF_1,\FF_2,\ldots),$$
where $\FF_i\ge 0$ is the fraction of families with $i$ children, and 
$$\sum_i \FF_i=1.$$

The {\em marriage ratio} $\alpha$ is defined as the parentage of married couples
in the population. Clearly, there could be a difference between men and women.
Since we are interested in population control, and assuming monogamy,
we will measure the marriage ratio as the ratio between the total number
of married women and the total number of women.

In order to be able to analyze strong-ties social networks, 
we distinguish between the two different types of links based on family ties, 
namely, {\em sibling} links and {\em marital} links. In our graphical 
illustrations, these two types of links are drawn in {\em blue} and {\em red}, 
respectively.

As a first illustration of the effects of the family parameters on societal
fragmentation, let us consider two relevant examples, namely, the societal
structure of the populations of current-day China and India.

Based on the national statistics of China, see \cite{China2014},
the family size distribution $\family_{China}$ in urban society in China
(with the vector truncated after the first six positions, i.e.,
taking the last entry $\PP_5$ to represent the fraction of families 
with 5 or more children) is
$$\family_{China}=(0.418,0.269,0.17,0.085,0.039,0.019).$$
The marriage ratio 
in China is $\alpha_{China}=0.92$ \cite{NBSchina12}.

Figure \ref{fig:china} depicts a social network denoted $\SNchina$, 
generated by a simulation over a population of 157 individuals
based on the above family parameters.

\begin{figure}[htb]
\begin{center}
\includegraphics[scale=0.7]{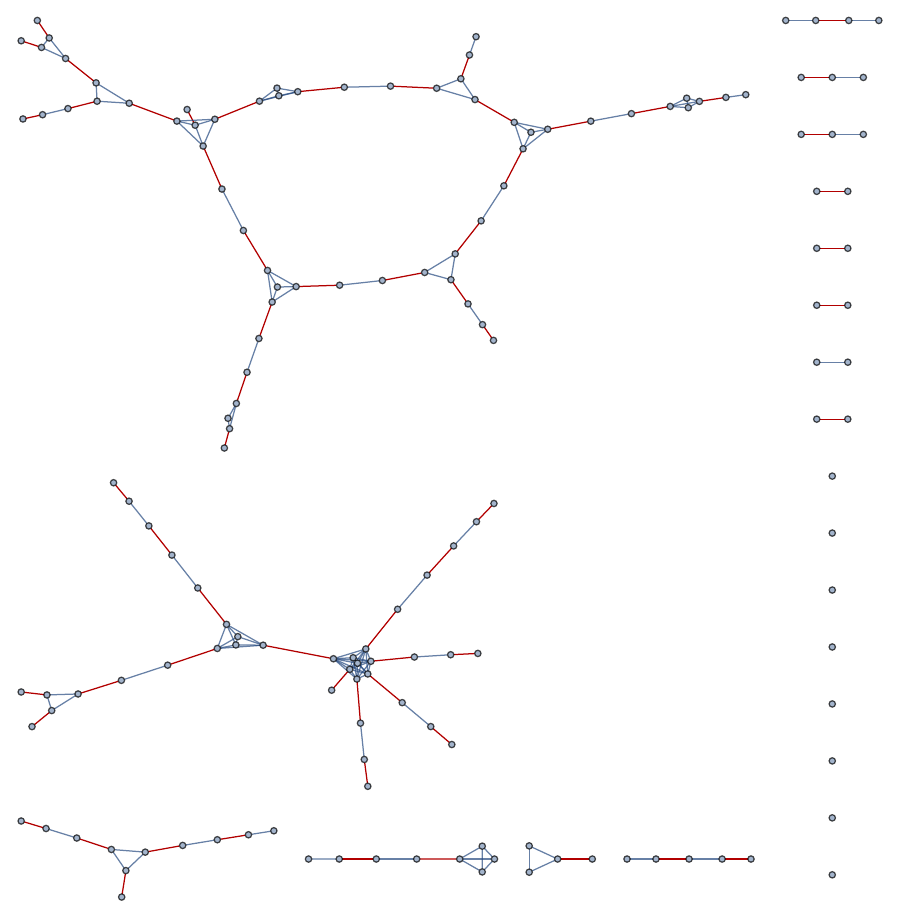}
\end{center}
\caption{\sf
A $157$ node network $\SNchina$, where the number of siblings simulates 
the distribution of siblings in large cities in China. 
Blue edges connect siblings, while red edges connect married couples. 
As can be seen, the network consists of a few large families, 
represented by the blue cliques in the graph. 
There are six relatively large connected components, 
and 16 smaller connected components, 8 singletons and 8 connected components, 
with 3 or fewer edges.
}
\label{fig:china}
\end{figure}

Similarly, based on recent data on family size in India, see \cite{India2011},
the family size distribution vector $\family_{India}$ in India 
(again truncated after the first six positions, so that $\PP_5$ represents 
the fraction of families with 5 or more children) is
$$\family_{India}=(0.126, 0.121, 0.199, 0.193, 0.141, 0.22).$$
Figure \ref{fig:india} depicts a social network denoted $\SNindia$, 
generated by a simulation over a population of 130 individuals
based on this family size distribution and an estimated\footnote{We were unable
to ascertain the exact data for India.} marriage ratio of $0.92$.

\begin{figure}[htb]
\begin{center}
\includegraphics[scale=0.75]{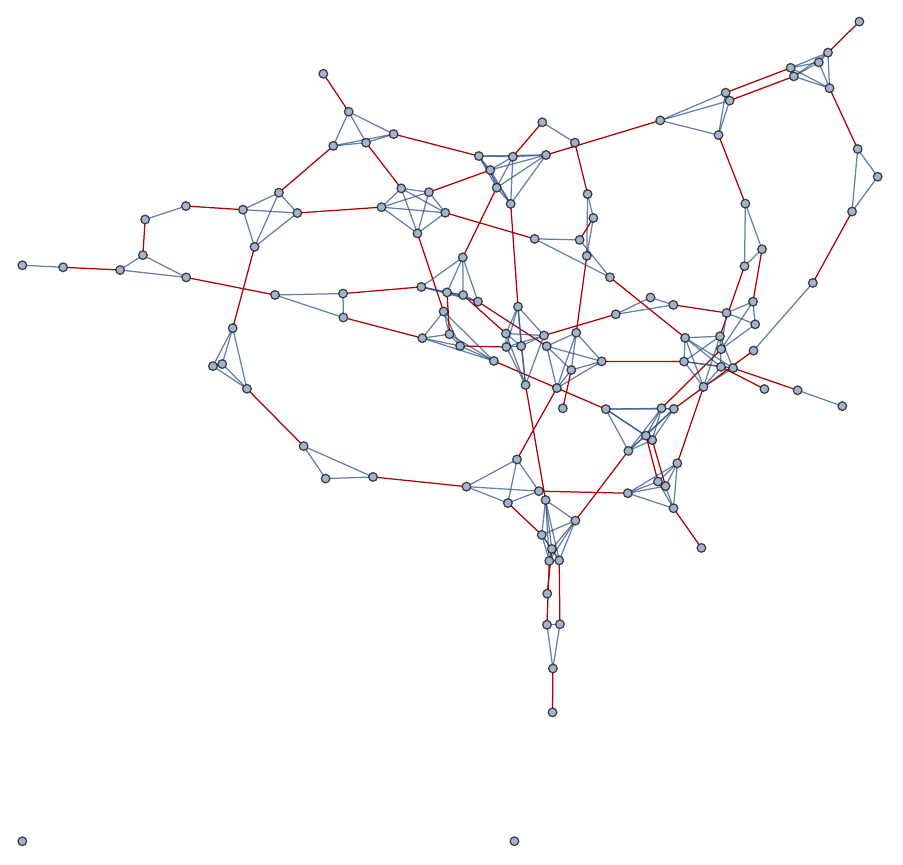}
\end{center}
\caption{\sf
A $130$ node network $\SNindia$, where the number of siblings simulates 
the distribution of siblings in India. The blue edges connect siblings. 
The average number of siblings is $2.8$, which is less than the real number 
in India, since the probability of having more than 7 siblings is assumed 
to be $0$. The red edges connect married couples. 
As can be seen, the network contains a single giant component.
}
\label{fig:india}
\end{figure}

As can be clearly seen, these two simulations exhibit strikingly different 
behavior between the resulting strong-ties social networks 
in terms of connectivity. Whereas the strong-ties social network $\SNchina$ 
is fragmented, the network $\SNindia$ remains connected and forms 
a giant component.

Our main goal in this paper is to try to explain and quantify the dependence 
of network fragmentation on the family parameters, in the hope of gaining 
a better understanding of the role and effects of population control policies.

\subsection{Results}

We begin (in Section \ref{sec:PCP}) by formally defining a class of 
population control policies and introducing a number of typical examples.
We then study the implications of these policies on societal fragmentation,
by simulating the implementation of these policies on a population of 
200 individuals, and inspecting the resulting strong-ties social networks.

Our simulations indicate that the family size distribution resulting from 
the employed population control policy directly affects fragmentation. 
More interestingly, it turns out that different policies may exhibit 
very different fragmentation effects, 
even when their effects on the population size are very similar.

We then turn to analyzing the probability that a given population control 
policy of the type described above will yield a fragmented society.
In order to carry out this analysis we turn to studying infinite populations.
This may appear odd, as clearly, population control deals with finite groups 
of people.
However, mathematical analysis of percolation systems tends to be easier
on infinite graphs. Moreover, there is a standard way to transform
the result from infinite graphs to finite graphs.
Basically, when dealing with infinite graphs the main question
in percolation theory is the appearance of an infinite size connected component.
In the context of a finite graph, this transforms into the existence of
a linear sized ``giant'' connected component.
Therefore, we prove our results on an infinite size population.

It follows from our results that 
strict policies, such as the 1-child policy employed in China until recently,
effectively reduces population size, but are likely to result 
in a highly fragmented society, composed of many separate components.
In fact, we prove that in order to avoid societal fragmentation, 
the population control policy must allow (at least a small fraction of) 
families of 3 or more children.

\section{Population control policies}
\label{sec:PCP}

In this section we introduce a simple class of population control policies, 
characterized as follows.
A {\em population control policy} is defined as a real vector 
$$\policy=(\PP_0,\PP_1,\PP_2,\ldots),$$
where $\PP_i\ge 0$ is the target fraction of families with $i$ children 
(or the desired probability that a random family will have $i$ children). 
These values should satisfy 
$$\sum_i \PP_i=1.$$

The condition imposed by the population control policy $\policy$ is that at 
any given time, the family size distribution $\family$
must satisfy the following condition:
\begin{equation}
\label{eq:act-vs-policy}
\hspace{0.5cm}
\mbox{For any }~ J\ge 0, 
\hspace{1.2cm}
\sum_{i=0}^J \FF_i ~\ge~ \sum_{i=0}^J \PP_i~.
\hspace{1.2cm}
\end{equation}
A possible method that can be applied in order to implement 
such a policy would be as follows. For every family $f$, draw a target integer 
$K(f)$ at random with the distribution $\policy$, 
i.e., setting $K(f)=i$ with probability $\PP_i$ for every $i$. 
Consequently, the maximum number of children allowed for the family $f$
will be $K(f)$.

In order to demonstrate the sensitivity of the strong-ties social network
to the population control policy used, 
we focus on a number of specific population control policies, 
to be described next.

We begin by comparing two basic policies, named hereafter 
the {\em 1-child policy} and the {\em $0/2$-children policy}.
Generally, these policies aim at reducing the population size by 50\% 
in each generation, by attempting to impose an average of one child per family.

\inline The 1-child policy ($\onechild$):
Under this policy, each family is allowed at most one child. 
This policy can be described by the vector 
$$\policy=(0,1,0,0,\ldots).$$
If imposed, such a policy ensures that at any given time, 
$$\FF_0+\FF_1 \ge 1$$
(or necessarily $\FF_0+\FF_1 = 1$),
i.e., there are no families with two or more children.

In the context of our discussion, this policy may be viewed as representing 
(a simplified version of) the original Chinese population control policy.

The main effect of the 1-child policy is that it strongly curbs 
the population size. 
For example, consider a generation consisting of a hundred men and a hundred 
women, where the marriage ratio is $\alpha=0.9$. The (approximately) 90
couples of this generation will have (at most) one child per family.
This means that the population size was reduced to well below 50\% of the size
of the previous generation.

What interests us about this policy, however, is the fact that it completely 
eliminates sibling ties from the network of the new generation.
Hence the network breaks down into about 45 (disconnected) couples
and a few unmarried individuals.
Figure \ref{fig:1c} depicts the resulting strong-ties social network $\SNone$
of the new generation under policy $\onechild$.

\begin{figure}[htb]
\begin{center}
\includegraphics[scale=0.5]{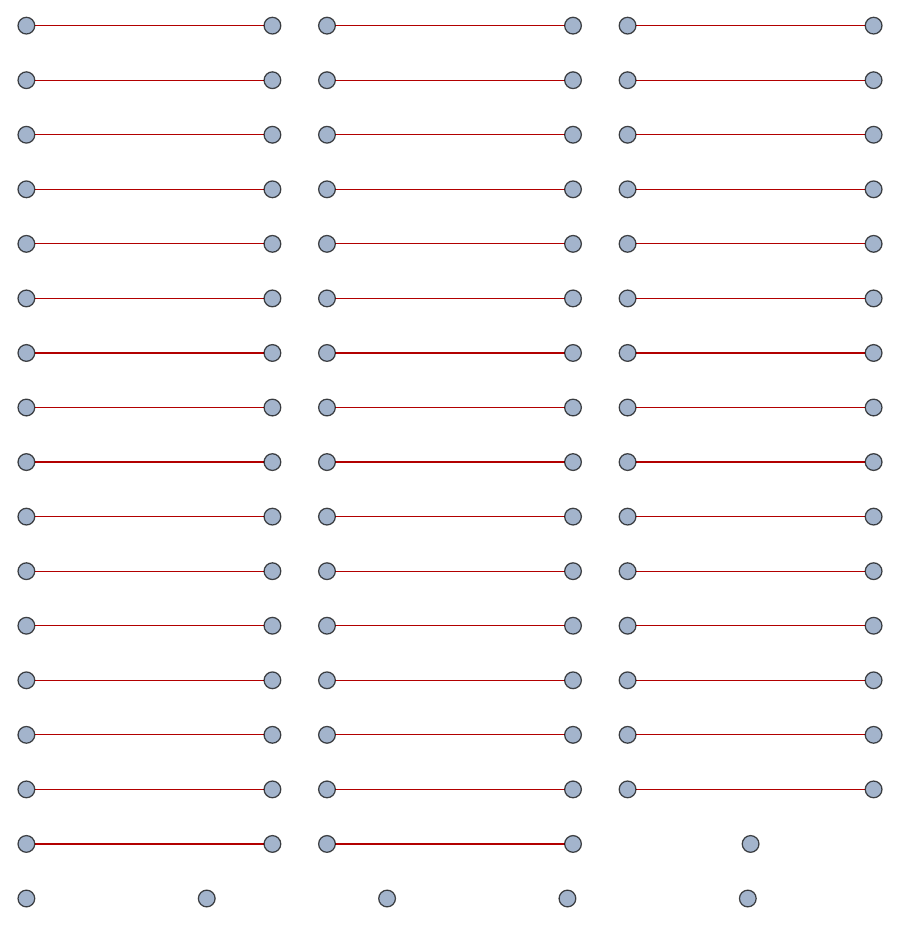}
\end{center}
\caption{\sf
The network $\SNone$ obtained by simulating the 1-child policy $\onechild$
starting with a generation of 200 individuals, with marriage ratio $\alpha=0.9$.
The network has no sibling edges (the only strong ties are marriage links),
hence it is totally fragmented into pairs. 
}
\label{fig:1c}
\end{figure}

\inline The $0/2$-children policy ($\ztwochild$):
This policy also attempts to maintain the average family size at one child.
It is described by the vector 
$$\policy = \left(\frac{1}{2},0,\frac{1}{2},0,0,\ldots\right),$$ 
namely, $\PP_0=\PP_2=1/2$ and all other probabilities are zero.
As discussed earlier, such a policy can be implemented by flipping 
an unbiased coin for each family and, depending on the outcome, 
allowing the family either zero or two children\footnote{We deliberately 
ignore societal, moral and philosophical issues, such as fairness,
involved in implementing such an ``arbitrarily heartless'' policy.}.

Observe that the outcome of applying the $0/2$-children policy is essentially 
almost identical to that of the 1-child policy  
in terms of controlling population size. 
On the other hand, we claim that it does a better job maintaining 
sibling ties and thus keeping society connected.

Indeed, consider again the scenario of a generation consisting of 
a hundred men and a hundred women, with a marriage ratio of $\alpha=0.9$.
Figure \ref{fig:02c} depicts the resulting strong-ties social network $\SNztwo$
of the next generation when applying policy $\ztwochild$ in this scenario.

Note that of the married couples of the first generation, about half will have 
(at most) two children per family, and the other half will have no children 
at all (hence these families will 
not be represented in the network of the next generation).
This means that the goal of reducing the population size by 50\% is achieved 
by the  0/2-child policy just as effectively as by the 1-child policy.
However, the 0/2-child policy produces a next generation that still has
sibling links, although perhaps not enough to maintain complete connectivity,
as implied by the simulated network appearing in Figure \ref{fig:02c}.

\begin{figure}[htb]
\begin{center}
\includegraphics[scale=0.5]{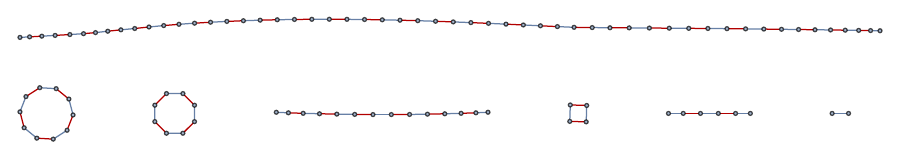}
\end{center}
\caption{\sf
The network $\SNztwo$ obtained by simulating the 0/2-child policy $\ztwochild$
starting with 200 individuals and $\alpha=0.9$.
}
\label{fig:02c}
\end{figure}

Next, we similarly compare two population control policies that aim at
maintaining the population size at steady state, i.e., attempt to impose
an average of two children per family.

\inline The 2-children policy ($\twochild$):
Under this policy, each family is allowed at most two children. 
This policy can be described by the vector 
$$\policy=(0,0,1,0,0,\ldots).$$
If imposed, such a policy ensures that at any given time, 
$$\FF_0+\FF_1+\FF_2=1,$$
i.e., there are no families with three or more children.

In the context of our discussion, this policy may represent 
(a simplified version of) the new Chinese population control policy.

Again, this policy keeps population size more or less stable, 
but it also limits the number of sibling ties in the network.
Returning to the scenario examined with the previous two policies,
of 200 individuals with marriage ratio $\alpha=0.9$,
under the 2-children policy the married couples of the first generation 
will have two children per family,
so the population size in the next generation will remain similar to 
(or slightly smaller than) that of the previous generation. 
Note, hhowever, that sibling ties will occur in the resulting network 
of the new generation with the same frequency as in the $\ztwochild$ policy. 

Figure \ref{fig:2c} depicts the strong-ties social network $\SNtwo$
of the new generation resulting in the above scenario under policy $\twochild$. 
Observe that this network is similar in structure to 
the network $\SNztwo$ of Figure \ref{fig:02c}, but is about twice its size. 

\begin{figure}[htb]
\begin{center}
\includegraphics[scale=0.5]{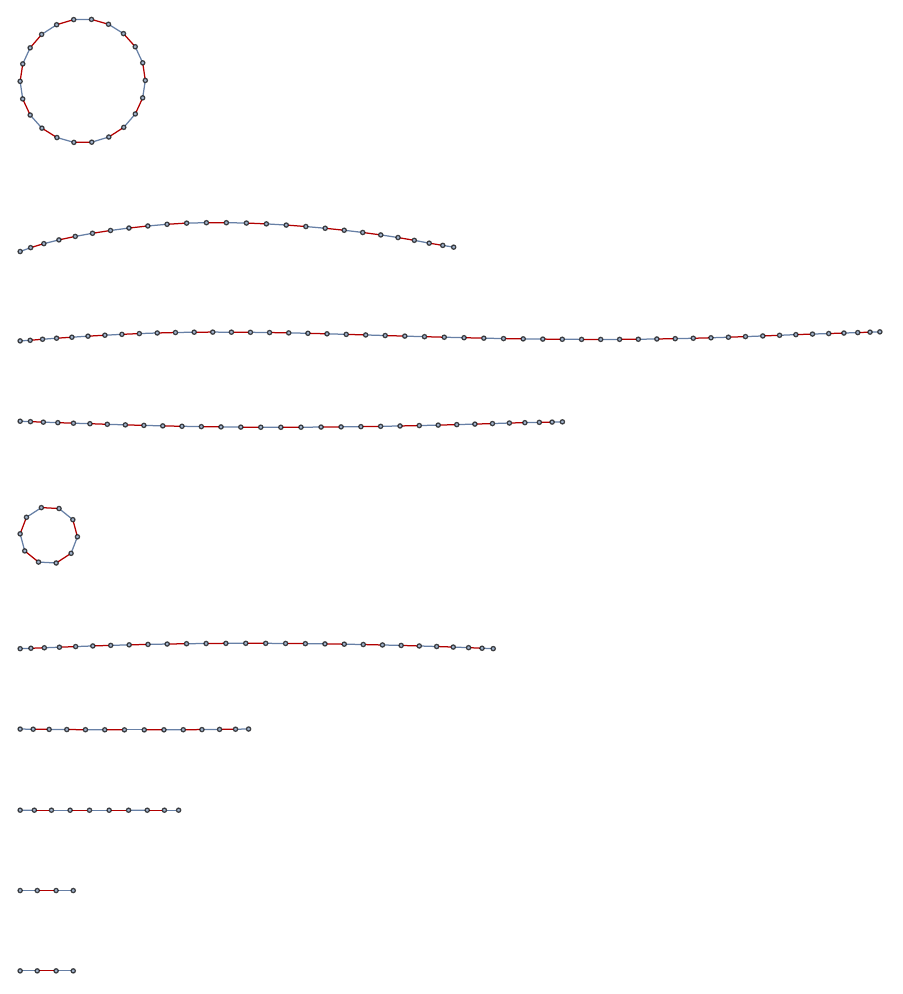}
\end{center}
\caption{\sf
The network $\SNtwo$ obtained by simulating the 2-child policy $\twochild$
starting with 200 individuals and $\alpha=0.9$.
}
\label{fig:2c}
\end{figure}

\inline The $0/3$-children policy ($\zthreechild$):
This policy also attempts to maintain the average family size at two children.
It is described by the vector 
$$\policy = \left(\frac{1}{3},0,0,\frac{2}{3},0,0,\ldots\right),$$ 
namely, $\PP_0=1/3$ and $\PP_3=2/3$ and all other probabilities are zero.
Such a policy can be implemented by flipping 
an biased $\frac{1}{3}~:~\frac{2}{3}$ coin for each family and, 
depending on the outcome, 
allowing the family either zero or three children.

The outcome of applying this policy is essentially almost identical to that 
of the 2-child policy in terms of controlling population size, but again, 
its impact on societal fragmentation is rather different.
This can be realized by inspecting Figure \ref{fig:03c}, 
which depicts the resulting strong-ties social network $\SNzthree$ 
of the new generation when applying policy $\zthreechild$
in the scenario discussed above. 

\begin{figure}[htb]
\begin{center}
\includegraphics[scale=0.5]{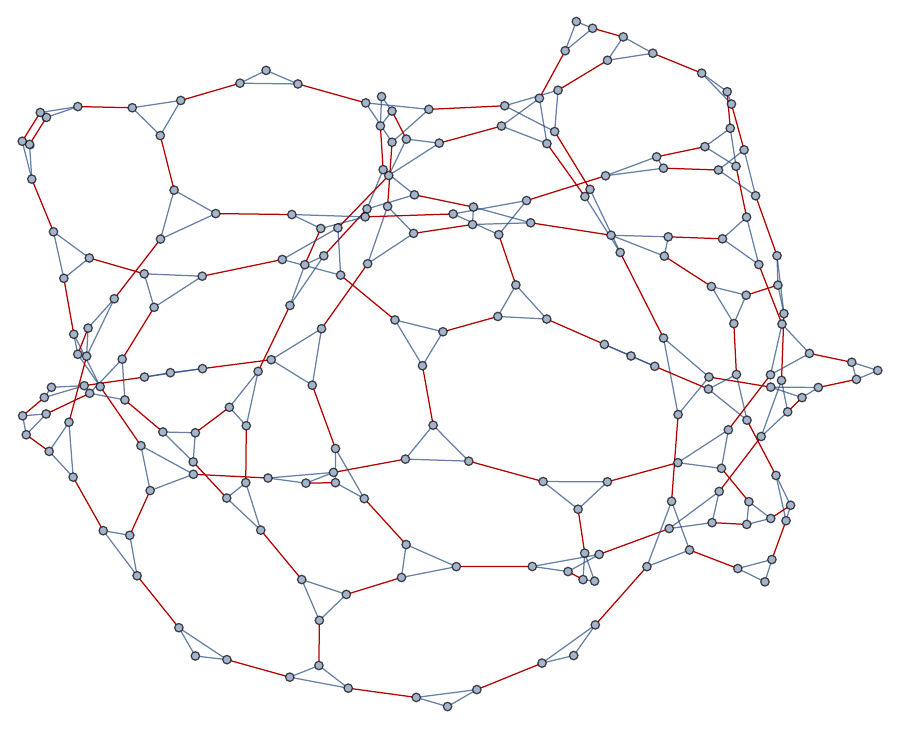}
\end{center}
\caption{\sf
The network $\SNzthree$ obtained by simulating the 0/3-child policy 
$\zthreechild$ starting with 200 individuals and $\alpha=0.9$.
}
\label{fig:03c}
\end{figure}

It should be realized that of the approximately 90 married couples 
of the original generation, about 30 families have no children,
hence these families are not represented in the network.
The other 60 families, however, have 3 children each, which yields 
a next generation with a population of about 180.
Most significantly, observe that in this network, the effect of the higher 
fraction of families with three children already suffices to ensure 
rather solid connectivity.

\section{Analysis}

In this section we analyze the likelihood that the strong-ties social network
becomes fragmented, and show that this is dependent on the two family 
parameters discussed above, namely, the family size distribution $\family$ 
and the marriage ratio $\alpha$, hence in turn it depends also on the 
population control policy $\policy$ that's being used.

One of the interesting implications of our analysis is that under the studied 
model, any population control policy that disallows families of 3 or more 
children incurs the possible risk of societal fragmentation.

Specifically, for the $0/3$-children population control policy $\zthreechild$, 
our results imply that there is a critical value for the marriage ratio 
$\alpha$, such that the network becomes fragmented for $\alpha$ values below 
this threshold, and is connected for higher $\alpha$ values.

As discussed earlier, we perform this analysis on an infinite size population,
and hence our definition for societal fragmentation changes with respect to 
the simulation based observations presented in the previous section.
We say that society is non-fragmented if an infinite size connected component 
emerges (with positive probability).

\subsection{The Galton-Watson branching process}

The {\em Galton-Watson branching process} (cf. \cite{Durrett2007}, pp. 31-32) 
generates a (potentially infinite) tree $T_{GW}$ rooted at $C_0$.
The {\em level} of a node in the tree is its distance from the root $C_0$
(whose level is 0). 

Let $\xi_i^{t}$, $i,t\geq 0$, be independent identically distributed
nonnegative integer valued random variables, 
where $\xi_i^t$ represents the number of children 
on level $t$ of the $i$th vertex on level $t-1$.
Let
$$\GWpolicy=(\pp_0,\pp_1,\ldots)$$
be a prescribed {\em child distribution}, where $p_k$ is the probability
for $k$ children,
$$p_k=\Prob[\xi_i^t=k].$$
Denote the expected value of $\xi_i^{t}$ by
$$\mu=\Exp[\xi_i^{t}].$$ 
The Galton-Watson process constructs the tree $T_{GW}$ iteratively 
as follows.
For the $i$th vertex on level $t-1$, the process assigns the variable 
$\xi_i^{t}$ a value at random according to the child distribution $\GWpolicy$,
and adds to this vertex exactly $\xi_i^{t}$ new children on level $t$.

Figure \ref{fig:GW} illustrates two possible outcomes of the 
Galton-Watson branching process.

\begin{figure}[htb]
\begin{center}
\includegraphics[scale=.3]{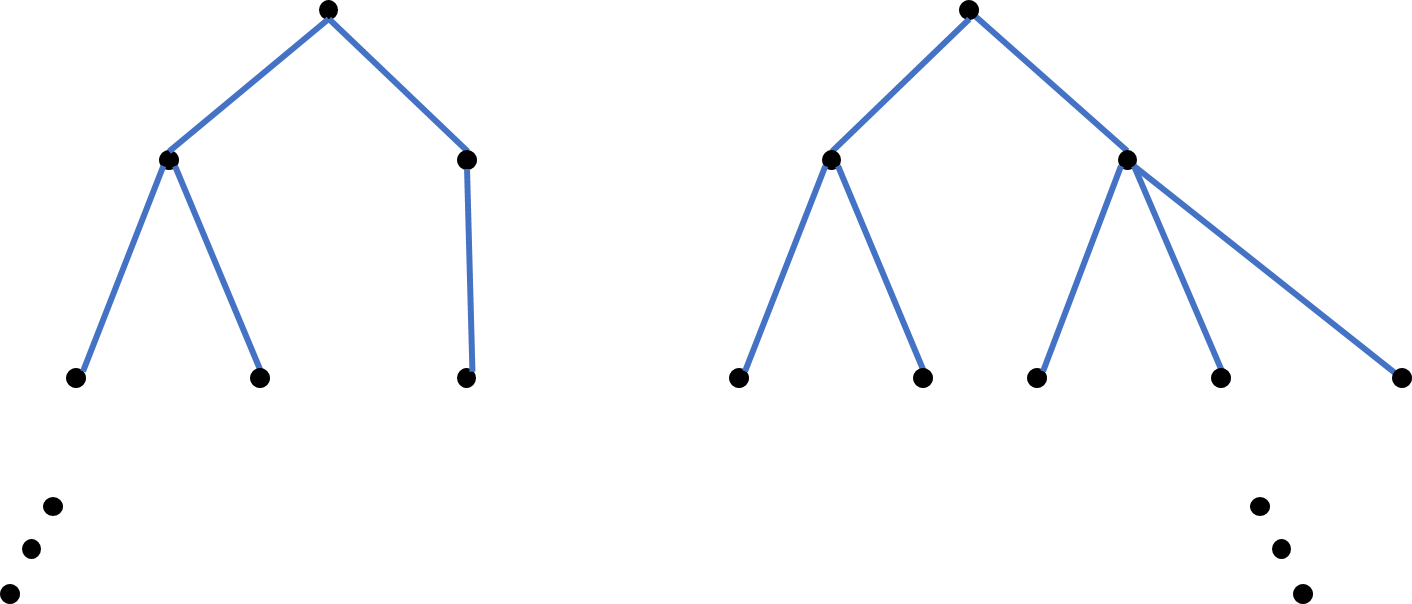}
\end{center}
\caption{\sf
Illustration of two possible outcomes $T_{GW}$ of the Galton-Watson 
branching process.
}
\label{fig:GW}
\end{figure}

Define the sequence $Z_{t}$, for all $t\geq 0$ by 
$$Z_0=1$$ 
and
\[ Z_{t+1} =
\left\{
  \begin{array}{ll}
    \xi_1^{t+1} +...+ \xi_{Z_{t}}^{t+1},  & \textrm{If } Z_{t}>0~,\\
    0, & \textrm{If } Z_{t}=0~.
  \end{array}
\right.
\]
The variable $Z_t$ represents the number of vertices at level $t$
(namely, at distance $t$ from the root).

We are now ready to cite Theorems 2.1.2, 2.1.3 and 2.1.4 
from \cite{Durrett2007}.
Let us denote by $P^{GW}_{\infty}$ the probability that an infinite size
connected component will emerge in the Galton-Watson branching process. 

\begin{theorem}
{\bf \cite{Durrett2007}}
\label{Thm:Durrett}
\begin{description}
\item[(a)]
If $\mu<1$ then $Z_{t}=0$ for all sufficiently large $t$.
(Hence $P^{GW}_{\infty}=0$, i.e., the resulting tree has no infinite size
connected component.)
\item[(b)]
If $\mu=1$ and $\Prob[\xi_i^{t}=1)<1$ then $Z_{t}=0$ 
for all sufficiently large $t$.
(Hence $P^{GW}_{\infty}=0$.)
\item[(c)]
If $\mu>1$ then $\Prob[Z_{t}>0]>0$ for all sufficiently large $t$.
(Hence $P^{GW}_{\infty}=0$, i.e., the tree 
has an infinite size connected component with nonzero probability.)
\end{description}
\end{theorem}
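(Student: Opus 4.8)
\Proof
The plan is to prove the classical extinction--survival dichotomy through the \emph{probability generating function} of the offspring distribution. Define
$$\phi(s)\;=\;\Exp\!\left[s^{\xi_i^t}\right]\;=\;\sum_{k\ge 0}p_k\,s^k,\qquad s\in[0,1].$$
On $[0,1]$ this function is nonnegative, nondecreasing, and convex (all coefficients $p_k$ are nonnegative), and it satisfies $\phi(0)=p_0$, $\phi(1)=1$, and $\phi'(1)=\sum_k k\,p_k=\mu$. The entire argument reduces to locating the fixed points of $\phi$ on $[0,1]$, and the geometry of a convex curve crossing the diagonal $y=s$ does all the work.

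First I would establish that the generating function of $Z_t$ is the $t$-fold self-composition $\phi^{(t)}=\phi\circ\cdots\circ\phi$. This follows by induction: conditioning on $Z_t$ and using that the offspring counts $\xi_i^{t+1}$ are i.i.d.\ and independent of $Z_t$, one obtains $\Exp[s^{Z_{t+1}}]=\phi^{(t)}(\phi(s))$. Since $0$ is absorbing for the sequence $(Z_t)$, the events $\{Z_t=0\}$ increase with $t$, so the \emph{extinction probability}
$$q\;=\;\Prob[\,Z_t=0 \text{ for some } t\,]\;=\;\lim_{t\to\infty}\Prob[Z_t=0]\;=\;\lim_{t\to\infty}\phi^{(t)}(0)$$
is well defined. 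I would then show $q$ is the \emph{smallest} fixed point of $\phi$ in $[0,1]$: continuity gives $\phi(q)=q$, while monotonicity of each iterate gives $\phi^{(t)}(0)\le\phi^{(t)}(r)=r$ for any fixed point $r$, whence $q\le r$. Since $P^{GW}_\infty=1-q$, the three claims are equivalent to ``$q=1$'' in cases (a),(b) and to ``$q<1$'' in case (c).

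With $q$ identified, the case analysis is a convexity argument on $g(s)=\phi(s)-s$, which is itself convex with $g(1)=0$ and $g'(1)=\mu-1$. When $\mu<1$ (case (a)) we have $g'(1)<0$; since $g'$ is nondecreasing, $g$ is strictly decreasing on $[0,1]$, so $g(s)>0$ for $s<1$, the unique fixed point is $1$, and $q=1$. (For this case an even shorter route is the first-moment bound $\Prob[Z_t\ge 1]\le\Exp[Z_t]=\mu^t$, which is summable, so Borel--Cantelli already forces $Z_t=0$ for all large $t$ almost surely.) When $\mu>1$ (case (c)), $g'(1)>0$ forces $g(s)<0$ just below $1$, while $g(0)=p_0\ge 0$; if $p_0>0$ a sign change yields a fixed point $q\in(0,1)$, and if $p_0=0$ extinction is impossible and $q=0$. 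Either way $q<1$, giving $P^{GW}_\infty>0$.

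The delicate case, and the one I expect to be the main obstacle, is the critical regime $\mu=1$ with $\Prob[\xi_i^t=1]<1$ (case (b)). Here $g(1)=0$ and $g'(1)=0$, so the first-moment method is useless ($\Exp[Z_t]=1$ for all $t$) and the slope comparison is inconclusive. The key is to upgrade convexity to \emph{strict} convexity: the constraint $\sum_k k\,p_k=1$ together with $p_1<1$ forces positive mass on some $k\ge 2$ (and hence also on $k=0$), so $\phi''>0$ on $(0,1)$. Strict convexity of $g$ with $g(1)=g'(1)=0$ then makes $g$ strictly decreasing on $[0,1)$, so again $g(s)>0$ for $s<1$ and the only fixed point is $1$, giving $q=1$. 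I would finish by translating $q=1$ back into the pathwise statement: since $\{Z_t=0\text{ eventually}\}=\{Z_t=0\text{ for some }t\}$ has probability $q=1$, the population is extinct for all sufficiently large $t$ almost surely, as asserted in (a) and (b).
\QED
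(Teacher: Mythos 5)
The paper does not prove this theorem at all---it is quoted verbatim from Durrett \cite{Durrett2007} (Theorems 2.1.2--2.1.4)---so there is no in-paper argument to compare against; your generating-function proof is the standard one found in that reference, and it is correct in all three cases, including the careful handling of the critical case $\mu=1$ via strict convexity (your observation that $\mu=1$ and $p_1<1$ force mass on some $k\ge 2$ and hence $p_0>0$ is exactly the point that makes that case work). The only blemish worth flagging is in the statement itself, not your proof: the parenthetical in part (c) should read $P^{GW}_{\infty}>0$, not $P^{GW}_{\infty}=0$, as your conclusion $q<1$ correctly delivers.
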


\subsection{The strong-ties branching process}

In order to analyze the situation where the probability of a family to have 
exactly $i$ children is limited by the population control policy $\policy$
to be at most $\FF_i$, we use the following {\em strong-ties branching process},
which is similar to the Galton-Watson process. 

The strong-ties branching process also constructs a (possibly infinite) tree 
$T_{ST}$ rooted at $C_0$. 
Here, however, each node in the tree represents a married couple.
The {\em level} of a node in the tree is its distance from the root $C_0$
(whose level is 0). 
The level of a {\em person} is the level of the node (i.e., the couple) 
it belongs to.

We start with a married couple $C_0$ serving as the root of the process. 
The root's children are {\em new nodes} representing the couples 
involving the married siblings
of (either one of the spouses in) the couple $C_0$. 
Note that each of those new nodes consists of two spouses, say, $a$ and $b$.
One of these spouses, say $a$, has exactly one married sibling in $C_0$
and all the other married siblings are on level $1$ in the tree. 
All the married siblings of the other spouse, $b$, are on level $2$.

This construction process proceeds by induction. 
Consider a node $C_t$, representid a married couple $a$ and $b$, 
on level $t$ of the tree generated by the branching process. 
Its children are new nodes representing the couples involving 
the married siblings
of $a$ and $b$ that are not in the tree already. 
For one of spouses, say $a$, one married sibling is on level $t-1$ 
and all the other are on level $t$.
(We hereafter refer to this spouse as the ``connected'' spouse.)
All the married siblings of the other spouse, $b$, are on level $t+1$ 
of the tree.
(We hereafter refer to this spouse as the ``new" spouse.)

\begin{figure}[htb]
\begin{center}
\includegraphics[scale=.3]{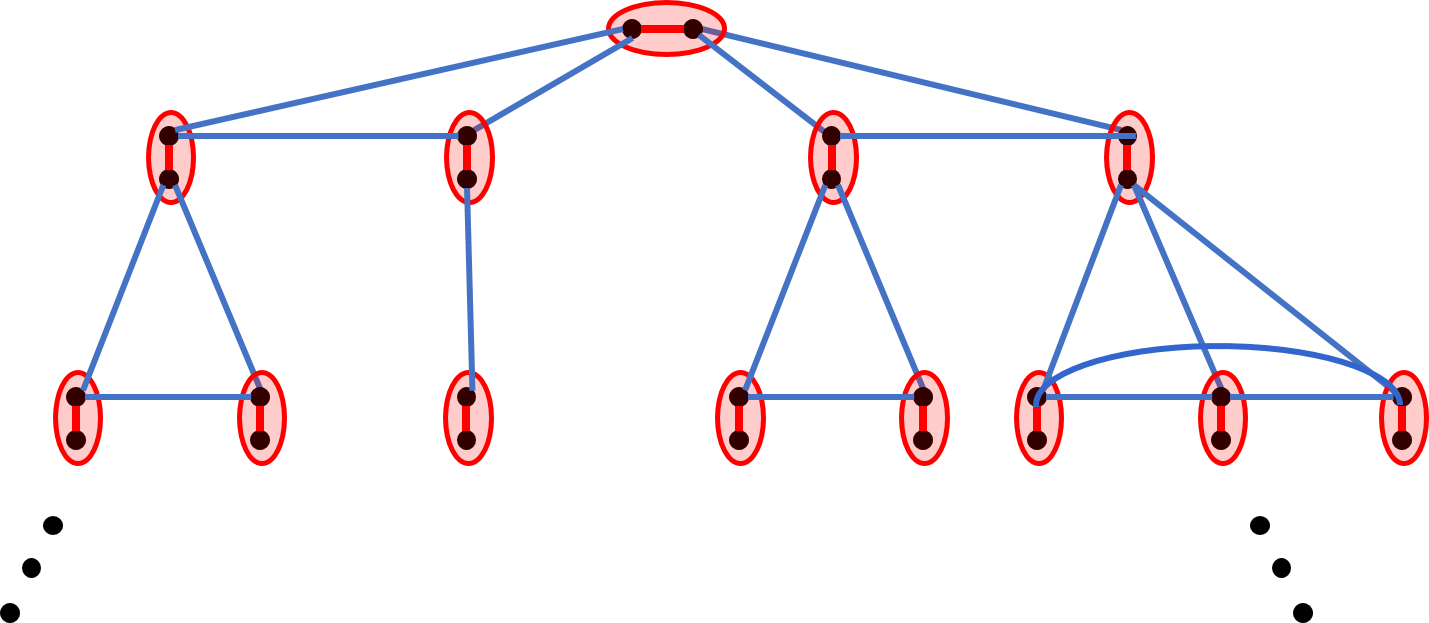}
\end{center}
\caption{\sf
A possible graph $G_{ST}$ constructed by the strong-ties branching process.
}
\label{fig:STa}
\end{figure}

Observe that if we take into account all the sibling links among the added 
nodes, then the constructed strong-ties graph $G_{ST}$ is not a tree,
since nodes corresponding to siblings are connected by a clique. Figure 
\ref{fig:STa} 
illustrates such a possible graph.

However, we may transform this graph into a strong-ties tree $T_{ST}$ by erasing
the clique connecting a group of siblings, and replacing it with a ``star'' 
connecting the first sibling (on some level $t$)
to all other sibling (on the next level $t+1$). Figure 
\ref{fig:STb} 
illustrates the strong-ties tree $T_{ST}$ corresponding to the strong-ties 
graph $G_{ST}$ of Figure 
\ref{fig:STa}.

\begin{figure}[htb]
\begin{center}
\includegraphics[scale=.3]{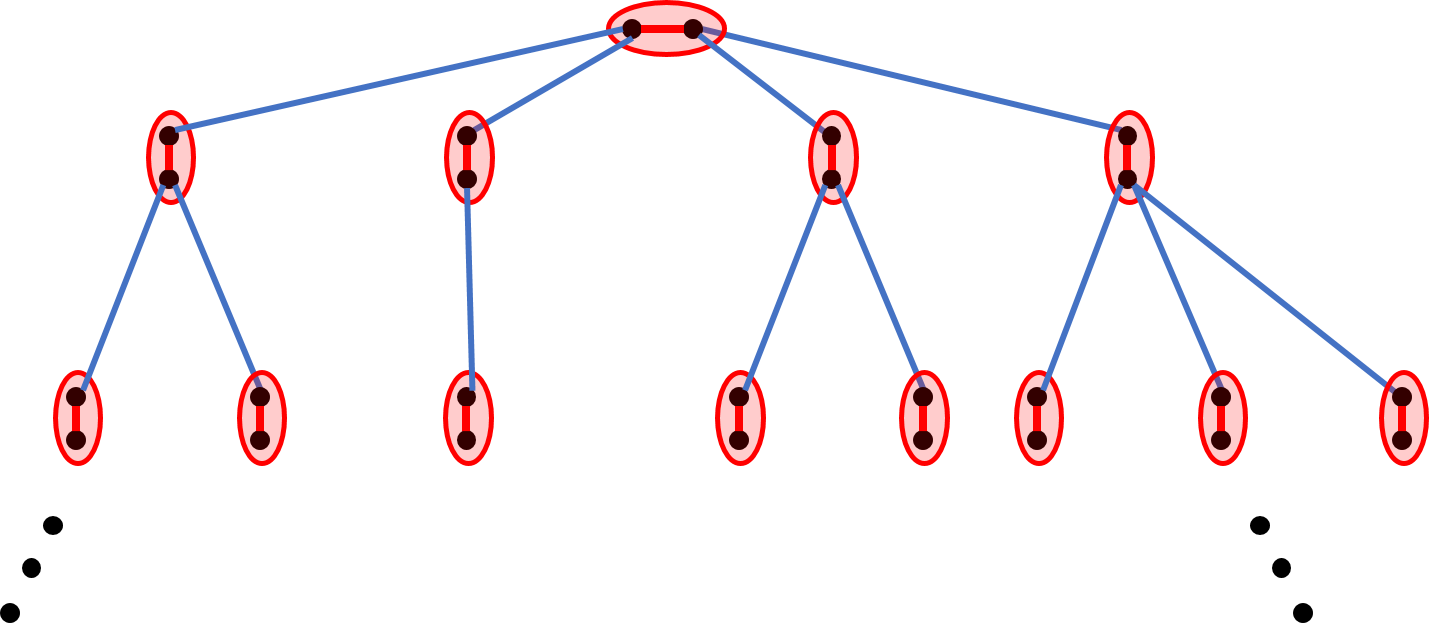}
\end{center}
\caption{\sf
The corresponding tree $T_{ST}$ obtained by replacing sibling cliques 
with stars.
}
\label{fig:STb}
\end{figure}

Note that the resulting strong-ties tree $T_{ST}$ is constructed 
rather similarly to the Galton-Watson tree $T_{GW}$.
The only technical difference is that in the strong-ties branching process,
the process is applied twice at the root $C_0$.
Hence, for example, the resulting tree $T_{ST}$ depicted in Figure
\ref{fig:STb}
corresponds to the tree obtained by
``fusing'' the roots of the two GW trees depicted in Figure \ref{fig:GW}.

To analyze the strong-ties branching process, we first compute the probability 
distribution of the actual number of children of an arbitrary non-root node. 
Denote this distribution by 
$$\actual=(\AA_0,\AA_1,\AA_2,\ldots),$$ 
where $\AA_i$ is the probability that a non-root node will actually have 
$i$ children in the tree. 
Note that the number of children of a node in our process 
depends on two factors: one is the number of sibling of the new spouses, 
and the other is how many of those are married. 
We next derive the formula for $\AA_i$, starting with the cases of 
$i=0/1$ children, and then analyzing the general case.

Consider first the case where a node $C$ on some level $t$ in the branching 
tree has $0$ children. This means that in the married couple of the node $C$, 
the new spouse comes from a family that has $k$ children for some $k\ge 1$, 
where only one of those is married and all other siblings are unmarried. 
Note that this married sibling is already in the tree (on level $t-1$). 
Therefore, the probability of this event to happen is
\begin{equation}
\label{eq:Galton-0}
\AA_0 ~=~ \sum_{k=1}^{\infty} \FF_k (1-\alpha)^{k-1}.
\end{equation}
Next, consider the case where a node $C$ on some level $t$ in the tree 
has one child. This means that in the married couple of the node $C$, 
the new spouse comes from a family that has $k$ children for some $k\ge 2$, 
where only two of those siblings are married and the rest are unmarried. 
The probability of this event to happen is
$$\AA_1 ~=~ \sum_{k=2}^{\infty} \FF_k {k-1 \choose 1} \alpha (1-\alpha)^{k-2}.$$
In the general case, where a node $C$ has $j$ children in the tree, 
the new spouse of $C$ comes from a family that havs $k\ge j+1$ children, 
where only $j+1$ of those siblings are married and the rest are not. 
The probability of this event to happen is
\begin{equation}
\label{eq:Galton-j}
\AA_j ~=~ \sum_{k=j+1}^{\infty} \FF_k {k-1\choose j}\alpha^j (1-\alpha)^{k-1-j}.
\end{equation}

We now use Theorem \ref{Thm:Durrett} to draw some conclusions on the size 
of the connected component of the strong-ties branching process 
that starts at $C_0$. 
(Note that, as explained before, the strong-ties tree $T_{ST}$ generated by 
the strong-ties branching process is similar to the Galton-Watson tree $T_{GW}$,
and therefore the theorem is applicable to it as well.)

According to Theorem \ref{Thm:Durrett}, the dynamics of the strong-ties 
branching process is completely determined by $\mu$. So to understand the 
emergence of an infinite size connected component
in the strong-ties social network, we need to determine the expected number 
of children in the strong-ties tree $T_{ST}$.
This, by our calculations, is equal to 
\begin{equation}
\label{eq:mu}
\mu ~=~ \sum_{j=1}^{\infty} j \cdot \AA_j.
\end{equation}
We thus get the following.
Denote by $P^{ST}_{\infty}$ the probability that an infinite size
connected component will emerge in the strong-ties branching process. 
\begin{theorem}
\label{thm:2}
Assume that $\alpha<1$, and consider any population control policy s.t. 
$\FF_i=0$ for all $i\ge 3$. Then $Z_{t}=0$ for all sufficiently large $t$
(hence $P^{ST}_{\infty}=0$).
\end{theorem}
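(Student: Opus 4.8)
The plan is to apply part (a) of Theorem~\ref{Thm:Durrett}, so the entire task reduces to showing that the mean offspring number $\mu$ of the strong-ties process satisfies $\mu<1$. Since the only difference between $T_{ST}$ and a genuine Galton-Watson tree is that the branching rule is applied twice at the root $C_0$, this is a finite perturbation that cannot affect the subcritical extinction dichotomy; hence once $\mu<1$ is established, Theorem~\ref{Thm:Durrett}(a) immediately yields $Z_t=0$ for all sufficiently large $t$, and therefore $P^{ST}_{\infty}=0$.

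First I would substitute the hypothesis $\FF_i=0$ for all $i\ge 3$ into the general offspring formula~(\ref{eq:Galton-j}). The crucial structural observation is that the probability $\AA_j$ of a non-root node having $j$ children is a sum over families with at least $j+1$ children; for every $j\ge 2$ this forces the summation index $k$ to satisfy $k\ge j+1\ge 3$, so every surviving summand carries a factor $\FF_k$ with $k\ge 3$, all of which vanish by assumption. Consequently $\AA_j=0$ for all $j\ge 2$, and the offspring distribution of a non-root node is supported entirely on $\{0,1\}$.

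Next I would evaluate the single remaining positive term. In $\AA_1$ only the $k=2$ contribution survives, giving $\AA_1=\FF_2\,\alpha$. The mean~(\ref{eq:mu}) then collapses to $\mu=\sum_{j\ge 1} j\,\AA_j=\AA_1=\FF_2\,\alpha$, since all higher terms are zero. Because $\FF_2\le 1$ (being an entry of a probability distribution) and $\alpha<1$ by hypothesis, we obtain $\mu=\FF_2\,\alpha\le\alpha<1$, which is precisely the condition required by Theorem~\ref{Thm:Durrett}(a).

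I do not expect a genuine obstacle here: the heart of the argument is a short bookkeeping computation, and the mild quantitative slack ($\mu\le\alpha$) means the critical case $\mu=1$ of part (b) never even arises. The one point deserving an explicit word is the justification that doubling the branching rule at the root leaves the extinction behaviour unchanged; intuitively, a subcritical branching process started from any finite initial population still dies out, so the root modification is harmless, and I would state this observation directly rather than re-derive Durrett's theorem for the altered root.
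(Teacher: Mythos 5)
Your proposal is correct and follows essentially the same route as the paper: substitute $\FF_i=0$ for $i\ge 3$ into Eq.~(\ref{eq:Galton-j}) to get $\AA_j=0$ for $j\ge 2$, conclude $\mu=\AA_1<1$ using $\alpha<1$, and invoke Theorem~\ref{Thm:Durrett}(a). The only (harmless) difference is in the final inequality: the paper argues $\mu=1-\AA_0<1$ via $\AA_0>0$ from Eq.~(\ref{eq:Galton-0}), whereas you compute $\AA_1=\FF_2\,\alpha\le\alpha<1$ directly, which is if anything slightly more self-contained.
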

\begin{proof}
By the definition of $\AA_j$, if $\FF_i=0$ for all $i\ge 3$, then by
Eq. (\ref{eq:Galton-j}) it follows that 
$$\AA_j=0 ~\mbox{ for all }~ j\geq 2.$$ 
This implies that 
$$\AA_0+\AA_1 = 1,$$ 
and in addition, by Eq. (\ref{eq:mu}), 
$$\mu=0\cdot \AA_0 + 1\cdot \AA_1 = \AA_1 = 1-\AA_0.$$ 
Note that since $\alpha<1$, Eq. (\ref{eq:Galton-0}) implies that $\AA_0>0$. 
It follows that $\mu<1$. The theorem now follows from Thm. \ref{Thm:Durrett}(a).
\end{proof}

\begin{corollary}
\label{cor:3}
For any population control policy $\policy$, having a nonzero fraction 
of families with 3 or more children (namely, $\PP_i>0$ for some $i\ge 3$) 
is a necessary condition for the emergence of an infinite size component 
in the strong-ties branching process.
\end{corollary}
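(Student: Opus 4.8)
The plan is to prove the contrapositive and reduce everything to Theorem~\ref{thm:2}. Concretely, I would assume that the policy forbids large families, i.e. $\PP_i=0$ for all $i\ge 3$, and show that then no infinite component can emerge; negating this gives exactly the necessary condition claimed. The only nontrivial link in the chain is that a constraint on the \emph{policy} vector $\policy$ must be transferred into a constraint on the \emph{actual} family size distribution $\family$, since Theorem~\ref{thm:2} is phrased in terms of $\family$ rather than $\policy$.

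First I would exploit the admissibility condition (\ref{eq:act-vs-policy}) relating $\family$ to $\policy$. Since $\PP_i=0$ for all $i\ge 3$ and $\sum_i \PP_i=1$, we have $\sum_{i=0}^{2}\PP_i=1$. Instantiating (\ref{eq:act-vs-policy}) at $J=2$ yields $\sum_{i=0}^{2}\FF_i \ge \sum_{i=0}^{2}\PP_i = 1$. Combined with $\FF_i\ge 0$ and $\sum_i \FF_i=1$, this forces $\sum_{i=0}^{2}\FF_i=1$, hence $\FF_i=0$ for every $i\ge 3$. In words: a policy that caps family size at two children cannot be met by any realized distribution placing positive mass on families of three or more children, because (\ref{eq:act-vs-policy}) amounts to stochastic domination of $\policy$ over $\family$.

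Once $\FF_i=0$ for all $i\ge 3$ is established, the conclusion is an immediate invocation of Theorem~\ref{thm:2}: under the standing assumption $\alpha<1$ we obtain $\mu<1$, hence $Z_t=0$ for all sufficiently large $t$ and $P^{ST}_{\infty}=0$, so no infinite size component emerges. Taking the contrapositive gives precisely the corollary: emergence of an infinite component requires $\PP_i>0$ for some $i\ge 3$.

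The step I expect to be delicate is not computational but definitional, namely ensuring that the \emph{direction} of the inequality in (\ref{eq:act-vs-policy}) is used correctly, so that capping the policy really does cap the realized distribution and not the reverse. I would also flag that the hypothesis $\alpha<1$ is genuinely needed and should be carried over from Theorem~\ref{thm:2}: if $\alpha=1$, then even the pure 2-child distribution $\FF_2=1$ gives, via (\ref{eq:Galton-j}), $\AA_1=1$ and thus $\mu=1$ with an infinite path surviving, so the statement would fail. Everything beyond these two points is a one-line appeal to the already-proved theorem.
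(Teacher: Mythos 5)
Your proof is correct and follows essentially the same route as the paper's: both transfer the policy constraint to the realized distribution via Eq.~(\ref{eq:act-vs-policy}) at $J=2$ and then invoke Theorem~\ref{thm:2} (the paper phrases this as a contradiction rather than a contrapositive, which is immaterial). Your remark that the hypothesis $\alpha<1$ must be carried over from Theorem~\ref{thm:2} is a valid observation that the corollary's statement leaves implicit.
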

\begin{proof}
By Theorem \ref{thm:2}, having $\FF_i>0$ for some $i\ge 3$ is a necessary 
condition for the emergence of an infinite size component in the strong-ties 
branching process.
Assume, towards contradiction, that in the population control policy $\policy$,
$\PP_i=0$ for every $i\ge 3$, and yet there exists some $i\ge 3$ for which
$\FF_i>0$. Then 
$$\PP_0+\PP_1+\PP_2=1$$ 
but 
$$\FF_0+\FF_1+\FF_2<1.$$ 
This contradicts Eq. (\ref{eq:act-vs-policy}) for $J=2$.
\end{proof}

\subsection{The strong-ties social network}

Observe that the strong-ties social network is not necessarily a tree.
The strong-ties branching process results in a tree $T_{ST}$ because
the nodes added to each level during the construction process 
are always {\em new} nodes. In contrast, in the strong-ties social network,
even assuming an infinite supply of new nodes, there's always the possibility
that a cycle will occur due to the spouse selection process.
In particular, if we try to mimic the branching process by looking at an 
arbitrary couple $C_0$, and constructing the network around it level by level 
as in the branching process,
it may happen that for a couple $C$ added to level $t$, 
some of the siblings of {\em both} spouses already appear on level $t-1$.

Denote by $P^{*}_{\infty}$ the probability that an infinite size
connected component will emerge in a strong-ties social network
on an infinite population. Then the above discussion implies that 
$$P^{*}_{\infty} \le P^{ST}_{\infty}.$$ 
Hence the conclusion of Cor. \ref{cor:3}
applies to strong-ties social networks as well.

Let us examine the two concrete examples of the 2-child policy $\twochild$
and the $0/3$-child policy $\zthreechild$ presented earlier.
Note that as a direct consequence of the last corollary, the 2-child policy
prevents the emergence of an infinite size connected component.
Hence the 2-child policy inherently yields a fragmented social network.

Turning to the $0/3$-child policy, the situation depends on the marriage rate
$\alpha$. For example, assume $\alpha=0.9$. Furthermore, assume that
each family utilizes its allocated number of children to the maximum.
Then 
$$\AA_j = \left\{
\begin{array}{ll}
\frac{2}{3}\cdot(1-\alpha)^2,  & j=0,
\vspace{10pt}
\\ 
\frac{2}{3}\cdot 2\cdot\alpha(1-\alpha),  & j=1,
\vspace{10pt}
\\
\frac{2}{3}\cdot\alpha^2,  & j=2,
\vspace{10pt}
\\
0,  &  j \ge 3.
\end{array}
\right.$$
Hence 
$$\mu=2\cdot \frac{2}{3} \cdot\alpha = 1.2,$$ 
implying that such a policy stands a chance to keep society connected 
and result in an infinite size connected component
(dependent on other parameters omitted from the discussion here, e.g., 
the possibility that a family allowed 3 children will actually have 
fewer children).

\section{Conclusion}

In this paper we examine the role of population control policies
and show that they affect (simultaneously) two independent issues: 
they manage global population growth, but also affect 
the connectivity of strong social network ties. 

We show that in order to avoid fragmentation of the strong-ties 
social network, it should be allowed to have families with three or more 
children, in order to overcome the (however small) percentage of 
unmarried people, which is an unavoidable phenomenon. 

The paper gives several examples for population control policies,
and shows that while some of them provide an equal guarantee on 
the number of people, there are vast differences in terms of the resulting
fragmentation in the strong-ties social network.

A somewhat dissatisfying aspect of our model of population control policies 
is that some of the policies introduced in our model 
(such as $\ztwochild$ or $\zthreechild$) involve a certain degree
of unfairness, as they assign non-uniform family sizes. What's worse, 
the assignment is done in an arbitrary way. This makes these policies 
inherently problematic from a moral point of view.
To make such policies more viable, it may be desirable to consider ways
to reduce or restrict their level of arbitrariness, and perhaps augment them 
by introducing some social mechanisms of exception handling.

To illustrate these considerations, and possible approaches towards 
handling them, let us give yet another concrete example for a possibly
more attractive population control policy.
Consider a policy such as the following, which, for lack of a better name, 
we might call the 
$\twochildplus$ policy. 
Its probability distribution is
$$\policy=\left(0,0,\frac{9}{10},\frac{1}{10},0,0,\ldots\right).$$
It can be implemented by allowing each family two children, 
and in addition selecting 10\% of the families and granting them
permission for a third child. This selection may be done randomly
(thus ensuring at least some minimal degree of fairness), or alternatively 
by applying a variety of social and economic criteria.

Such a policy (or one designed along similar principles) may succeed 
in curbing population size, keeping in mind the fact that the marriage ratio 
is below 1. Specifically, assuming $\alpha=0.92$, the expected population
reduction in one generation when using policy $\twochildplus$ is 3.4\%, 
even assuming each family utilizes its quota of allowed children in full.

At the same time, the $\twochildplus$ policy succeeds also in preventing 
societal fragmentation. Note that the expected value of $\xi_i^{t}$ 
when using policy $\twochildplus$ is 
$$\mu = 1.1 \cdot \alpha,$$
so assuming specifically $\alpha=0.92$, we get $\mu = 1.012,$
hence by Theorem \ref{Thm:Durrett}(c), an infinite size connected component 
will emerge (in an infinite population) with nonzero probability.

Finally, the $\twochildplus$ policy can be seen 
as significantly more balanced, and less harsh, compared to the ones 
discussed earlier.



\begin{thebibliography}{1}

\bibitem{burt2009structural}
Ronald~S Burt.
\newblock {\em Structural holes: The social structure of competition}.
\newblock Harvard university press, 2009.

\bibitem{Durrett2007}
Rick Durrett.
\newblock {\em Random Graph Dynamics}.
\newblock Cambridge Univ. Press, 2007.

\bibitem{China-1child-book-11}
Esther C.~L. Goh.
\newblock {\em China's One-Child Policy and Multiple Caregiving: Raising little
  suns in Xiamen}.
\newblock Routledge, 2011.

\bibitem{weibo-2015}
Wentao Han, Xiaowei Zhu, Ziyan Zhu, Wenguang Chen, Weimin Zheng, and Jianguo
  Lu.
\newblock Weibo, and a tale of two worlds.
\newblock In {\em Proc. 2015 IEEE/ACM Int. Conf. on Advances in Social Networks
  Analysis and Mining (ASONAM)}, pages 121--128, 2015.

\bibitem{India2011}
Census of~India.
\newblock {HH-01 Normal Households By Household Size}, 2011.

\bibitem{NBSchina12}
National~Bureau of~Statistics~of China.
\newblock {\em women and Men in China: Facts and Figures 2012}.
\newblock 2012.

\bibitem{China2014}
National~Bureau of~Statistics~of China.
\newblock {Table 2-15 Family households by size and region (2013)}, 2014.

\end{thebibliography}
\end{document}